\documentclass[a4paper,USenglish]{lipics-v2018}
\usepackage{booktabs}
\usepackage{thmtools}
\usepackage{thm-restate}
\usepackage{hyperref}
\usepackage{cleveref}
\nolinenumbers



\bibliographystyle{plainurl}

\title{
Designing Practical PTASes for Minimum Feedback Vertex Set in Planar Graphs
}

\titlerunning{Designing Practical PTASes for Minimum Feedback Vertex Set in Planar Graphs}

\author{Glencora Borradaile}{Oregon State University \\{[Corvallis, OR, USA]}}{glencora@eecs.oregonstate.edu}{}{}

\author{Hung Le}{Oregon State University \\{[Corvallis, OR, USA]}}{lehu@oregonstate.edu}{}{}

\author{Baigong Zheng}{Oregon State University \\{[Corvallis, OR, USA]}}{zhengb@oregonstate.edu}{}{}

\authorrunning{G.\, Borradaile, H.\, Le and B.\, Zheng}

\Copyright{Glencora Borradaile, Hung Le and Baigong Zheng}

\subjclass{ Theory of computation $\rightarrow$ Graph algorithms analysis}

\keywords{Feedback Vertex Set, Planar Graph Algorithms, Approximation Schemes, Algorithm Engineering}

\category{}

\relatedversion{}

\supplement{Code available at:  https://github.com/zbgzbg2007/Practical-FVS-PTASes}

\funding{
This material is based upon work supported by the National Science Foundation under Grant No.\ CCF-1252833.
}

\acknowledgements{ The authors would like to thank Shay Mozes and Eli Fox-Epstein for sharing their code~\cite{FMPS16} to preprocess the road network graphs.}

\EventEditors{John Q. Open and Joan R. Access}
\EventNoEds{2}
\EventLongTitle{42nd Conference on Very Important Topics (CVIT 2016)}
\EventShortTitle{CVIT 2016}
\EventAcronym{CVIT}
\EventYear{2016}
\EventDate{December 24--27, 2016}
\EventLocation{Little Whinging, United Kingdom}
\EventLogo{}
\SeriesVolume{42}
\ArticleNo{23}

\begin{document}

\maketitle

\begin{abstract}
We present two algorithms for the minimum feedback vertex set problem in planar graphs: an $O(n \log n)$ PTAS using a linear kernel and balanced separator, and a heuristic algorithm using kernelization and local search.  We implemented these algorithms and compared their performance with Becker and Geiger's 2-approximation algorithm~\cite{BG96}. We observe that while our PTAS is competitive with the 2-approximation algorithm on large planar graphs, its running time is much longer.
And our heuristic algorithm can outperform the 2-approximation algorithm on most large planar graphs and provide a trade-off between running time and solution quality, i.e. a ``PTAS behavior''.
 \end{abstract}

\section{Introduction}

The {\em minimum feedback vertex set} problem (FVS) asks for a minimum set of vertices in an undirected graph such that after removing this set the resulting graph has no cycle.
This problem is one of Karp's 21 original NP-Complete problems~\cite{Karp72} and has applications in different areas, such as deadlock recovery in operating systems and reducing computation in Bayesian inference~\cite{BGNR98}. 
The current best approximation ratio for FVS in general graphs is 2 due to Becker and Geiger~\cite{BG96} and 
Bafna, Berman and Fujito~\cite{BBF99}, both of which could also work for the vertex-weighted version of FVS.
Chudak et al.~\cite{CGHW98} showed that the two algorithms can be explained in terms of the primal dual method and simplified the latter algorithm.

In planar graphs, the problem is still NP-hard~\cite{Yannakakis78}, but we can obtain better approximation algorithms.  A polynomial-time approximation scheme (PTAS) is a $(1+\epsilon)$-approximation algorithm runs in polynomial time for any fixed $\epsilon > 0$.  As far as we know, Kleinberg and Kumar~\cite{KK01} gave the first PTAS for FVS in planar graphs.  Demaine and Hajiaghayi~\cite{DH05} gave a different PTAS for FVS in single-crossing-minor-free graphs through their bidimensionality theory.  Cohen-Addad et al.~\cite{CCKMM16} gave a PTAS for the vertex-weighted version of this problem in bounded-genus graphs.  In our companion work~\cite{LZ18}, we showed that local search is also a PTAS for FVS in minor-free graphs.

Besides the above algorithms, researchers also proposed some heuristic algorithms for FVS and evaluated their performance in experiments.
For example, Pardalos et al.~\cite{PQR98} developed a greedy randomized adaptive search procedure for FVS, Brunetta et al.~\cite{BMT00} proposed a system based on local search and a branch-and-cut algorithm, Zhang et al.~\cite{ZYZS13} presented a variable depth-based local search algorithm with a randomized scheme, and Qin and Zhou~\cite{QZ14} introduced a simulated annealing local search algorithm for FVS.
However, all of these works focus on general graphs. Brunetta et al.~\cite{BMT00} included planar graphs in their experiments, but their test planar graphs were not very large, having at most one thousand of vertices.
So it is natural to ask the following question: 
\begin{center}
{\em which algorithm is preferred for FVS on large planar graphs in practice?}
\end{center}
One potential answer to this question is a PTAS.   
There are two reasons supporting this choice: theoretically, PTASes can provide the best approximation ratio; practically, it has been shown that PTASes can be made practical for the minimum dominating set problem~\cite{MGJ09}, the Steiner tree problem~\cite{TM11} and the traveling salesman problem (TSP)~\cite{BFKM17} in large planar graphs.
Unfortunately, we find that a simple PTAS for FVS does not find a more accurate solution than the 2-approximation algorithm in most real-world graphs and some synthetic graphs. For those test graphs where it can find better solutions, the improvement is less than 1 percent. Furthermore, the PTAS is much slower than the 2-approximation algorithm.
So another question to ask is 
\begin{center}
{\em can we obtain a practical algorithm that can find more accurate solutions than \\ the 2-approximation algorithm in large planar graphs?}
\end{center}
In this work, we propose a heuristic algorithm and show that it can outperform the 2-approximation algorithm in real-world graphs and most large synthetic graphs.

\subsection{Overview of Our Work}
To answer the first question, we implemented Becker and Geiger's 2-approximation algorithm~\cite{BG96} as our baseline, which is simpler than the algorithm of Bafna, Berman and Fujito~\cite{BBF99}.
In Section~\ref{sec:2app-opt}, we evaluate this implementation on some graphs where we can obtain optimal solutions or good lower bound for the optimal solutions.
We find that the 2-approximation algorithm finds solutions that are very close to the optimal in these instances.

To outperform this baseline, in Section~\ref{sec:ptas} we propose a simple-to-implement $O(n \log n)$ PTAS for FVS in planar graphs, which starts with a linear kernel for FVS (see Section~\ref{sec:kernel}) and then uses a balanced separator (see Section~\ref{sec:sep}) applied recursively to decompose the graph into a set of small subgraphs in which we will solve the problem optimally. 
The approach based on balanced separators has been applied to obtain PTASes for the maximum independent set problem~\cite{LT80} and the minimum vertex cover problem~\cite{CNS81} in planar graphs. However, this approach is criticized in literature for two reasons: (1) a good approximation ratio can only be obtained in very large graphs~\cite{CNS82, Baker94, DH05}, and (2) it needs the size of the optimal solution to be linear w.r.t. the size of input graph~\cite{Grohe03}. 
We overcome both issues.
For the first, we relate the error parameter $\epsilon$ to the largest size of the decomposed graphs instead of the size of the original graph, and the algorithm can provide good approximation ratio for any graph in this way.
For the second, we use a linear kernel as a proxy to achieve the linear bound for the optimal solution.
Since many problems~\cite{FLST10, FLST12, BFLPST16} admit linear kernels in minor-free graphs, which is a more general graph family and admits balanced separators of sublinear size~\cite{AST90}, we believe this approach could be applied more generally.

Other PTASes for planar FVS that have been proposed are either complicated to implement (relying on dynamic programming over tree decompositions) or not sufficiently efficient (having running time of the form $n^{O(1/\epsilon)}$).
Compared to those, our PTAS has some obvious advantages: (1) it only relies on some simple algorithmic components like the kernelization algorithm, which consists of a sequence of simple reduction rules, and balanced separators, which are known to be practical~\cite{ADGM07, HSWPZ09, FMPS16}; (2) it has very few parameters to optimize; (3) the constants behind the big $O$ notation are small enough and (4) its running time is theoretically faster than previous PTASes.
Performance of this PTAS on different large planar graphs is discussed in Section~\ref{sec:ptas-2app}.
We also incorporated heuristic steps (Section~\ref{sec:heu-ptas}) to improve its solution and analyzed the influence of the parameters of the heuristic.
However, counter to the success stories for the Steiner tree and TSP PTASes, we find that the solution found by this PTAS does not outperform the precision of the 2-approximation algorithm significantly.

Although our PTAS does not outperform the 2-approximation algorithm significantly, 
we use it as inspiration to engineer a PTAS-like heuristic we call a {\em Heuristic Approximation Scheme (HAS)}, that is a heuristic with a running-time/precision trade-off.
Our HAS has two main steps.
The first step (see Section~\ref{sec:hybrid}) is a hybrid algorithm that alternates the reduction rules of the linear kernel and the greedy step of the 2-approximation algorithm. 
The second step (see Section~\ref{sec:ls}) is a variant of local search.
Many local search heuristics start with a feasible solution and repeatedly construct a smaller solution by replacing a subset $A$ of the original solution with another smaller subset of the non-solution vertices, if it is feasible.
We notice that this could be inefficient, since there are too many ways to replace the subset $A$ and so only very small values of $|A|$ can be handled.
Instead, we use a {\em fixed-parameter tractable} (FPT) algorithm to determine the replacement for $A$.
An algorithm is FPT if it can solve a given problem optimally in running time $f(k) \cdot n^{O(1)}$, where $k$ is given as a fixed parameter (such as the size of the optimal solution, as for FVS) and $f$ is an arbitrary computable function. 
This kind of algorithm is very efficient when the parameter $k$ is small. 
Now given a feasible solution, our local search heuristic will repeatedly improve the solution by selecting a set $A$ from the solution, constructing a graph as the union of the non-solution vertices and set $A$, solving the problem in this graph optimally with the FPT algorithm, and replacing $A$ with the obtained optimal solution.

We implemented and evaluated HAS on different large planar graphs and analyzed the effects of its parameters (Section~\ref{sec:heu-2app}).
Our result shows that even its first step is able to find better solutions than the 2-approximation algorithm on most of our test graphs and its second step improves these solutions further.
As a result, the total improvement for all real-world graphs is at least 5 percent, which is more than 30000 vertices in the largest test graph.

We find that HAS is very flexible and provides a kind of ``PTAS behavior''.  
Its first step is competitive w.r.t. the running time so we can obtain a good solution quickly.
And its second step can be applied for a longer time to obtain a better solution when the running time is not strictly limited.
Thus, it can provide a trade-off between the running time and the solution quality by choices of the number of local search iterations.
Therefore, we believe this algorithm will be a better choice in practice.

\section{The Algorithms for FVS in Planar Graphs}
In this section, we briefly summarize the FVS algorithms we implemented for planar graphs:
the 2-approximation algorithm of Becker and Geiger~\cite{BG96}, 
Bonamy and Kowalik's linear kernel~\cite{BK16} for FVS in planar graphs (with optimizing modifications that we designed), 
our new PTAS using this linear kernel and balanced planar separators, 
and our proposed Heuristic Approximation Scheme.

\subsection{The 2-Approximation Algorithm}
Becker and Geiger's 2-approximation algorithm~\cite{BG96} works for vertex-weighted FVS in general graphs and
consists of two steps: (1) computes a greedy solution and (2) removes vertices to obtain a minimal solution.
In the first step, the algorithm assigns a score for each vertex (weight of the vertex divided by its degree), and repeatedly removes a vertex with minimum score from the graph and adds it to the greedy solution.
Each time a vertex is removed, the scores of its neighbors are updated.
In the second step, the algorithm tries to remove the vertices from the greedy solution in the reverse order in which they were added, to obtain a minimal feasible solution.

\subsection{Kernelization Algorithm}\label{sec:kernel}

A parameterized decision problem with a parameter $k$ admits a {\em kernel} if there is a polynomial time algorithm (where the degree of the polynomial is independent of $k$), called a {\em kernelization algorithm}, that outputs a decision-equivalent instance whose size is bounded by some function $h(k)$.
If the function $h(k)$ is linear in $k$, then we say the problem admits a {\em linear} kernel.

Bonamy and Kowalik's linear kernel for planar FVS~\cite{BK16} consists of a sequence of 17 reduction rules.
Each rule replaces a particular subgraph with another (possibly empty) subgraph, and possibly marks some vertices that must be in an optimal solution. 
The first 12 rules are simple and sufficient to obtain a $15k$-kernel~\cite{BK16}.
Since the remaining rules do not improve the kernel by much, and since Rule 12 is a rejecting rule\footnote{This is to return a trivial no-instance for the decision problems when the resulting graph has more than $15k$ vertices.}, we only implement the first 11 rules, all of which are local and independent of the parameter $k$.
The algorithm starts by repeatedly applying the first five rules to the graph and initializes two queues: queue $Q_1$ contains some vertex pairs that are candidates to check for Rule 6 and queue $Q_2$ contains vertices that are candidates to check for the last five rules.
While $Q_1$ is not empty, the algorithm repeatedly applies Rule 6, reducing $|Q_1|$ in each step.
Then the algorithm repeatedly applies the remaining rules in order, reducing $|Q_2|$ until $Q_2$ is empty.
After applying any rule, the algorithm updates both queues as necessary, and will
apply the first five rules if applicable. 
See the original paper~\cite{BK16} for full details of the reduction rules.

We remark that in the original paper~\cite{BK16}, the algorithm runs in expected $O(n)$ time, and each rule can be detected in $O(1)$ time for each candidate with a hash table.  However, we choose to use a balanced binary search tree instead of a hash table for a better practical performance.

The original algorithm works for the decision problem, so when applying it to the optimization problem, we need an additional step, called {\em lifting}, to convert a kernel solution to a feasible solution for the original graph. 
If a reduction rule does not introduce new vertices, then the lifting step for it will be trivial.
Otherwise, we need to handle the vertices introduced by reduction steps, if they appear in the kernel solution.
Among all the reduction rules, there are two rules, namely Rule 8 and Rule 9, that will introduce new vertices into the graph.  
For Rule 8, the following observation, whose proof is provided in the appendix, shows that the new vertex can be avoided.

\begin{restatable}{observation}{obsy}
\label{obs:y}
The new vertex introduced by Rule 8 can be replaced by a vertex from the original graph.
\end{restatable}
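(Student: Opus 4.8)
The plan is to argue directly from the structure of Rule 8. First I would look up the precise statement of Rule 8 in Bonamy and Kowalik's paper~\cite{BK16}: it is a reduction that fires on a specific local configuration — roughly, a vertex (or small set of vertices) of bounded degree whose neighborhood has a particular shape — and it contracts or deletes part of that configuration while introducing a single new vertex $y$ together with a few edges connecting $y$ to some vertices $v_1,\dots,v_t$ of the original graph. The key point is that $y$ is a \emph{degree-bounded} vertex whose role in the reduced instance is purely to "simulate" the cycles that passed through the removed configuration; it is not itself forced to be in any optimal solution. So the claim is essentially: whenever a kernel solution $S'$ uses $y$, we can swap $y$ out for one of the original vertices $v_i$ attached to it (or for some specific original vertex guaranteed by the configuration) without increasing the size of $S'$ and without destroying feasibility.

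The key steps, in order: (1) Write down the exact local picture of Rule 8 and identify the new vertex $y$, its neighbors in the reduced graph, and the original vertices involved. (2) Suppose a feasible FVS $S'$ of the reduced graph contains $y$; let $S'' = (S' \setminus \{y\}) \cup \{v\}$ for the appropriately chosen original vertex $v$ among $y$'s neighbors (the choice dictated by the configuration — I expect the configuration singles out a natural candidate, e.g. a vertex that lies on every cycle $y$ could have broken). (3) Show $|S''| \le |S'|$ (immediate, since we remove one vertex and add at most one). (4) Show $S''$ is still a feedback vertex set of the reduced graph: every cycle avoiding $S''$ would have to pass through $y$ (since $S'$ was feasible), but any cycle through $y$ in the reduced graph uses two of $y$'s incident edges, and by the bounded-degree / local structure of the configuration such a cycle is forced to pass through $v$ as well — contradiction. (5) Conclude that $y$ never needs to appear in the kernel solution, so in the lifting step we may always assume the returned kernel solution is $y$-free, and hence no special lifting handling is required for Rule 8.

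The main obstacle is step (4): verifying that \emph{every} cycle through the new vertex $y$ in the reduced graph necessarily passes through the chosen original vertex $v$. This requires a careful case analysis of the local configuration of Rule 8 — in particular, one must check that the new edges incident to $y$ were added in such a way that $y$'s only "escape routes" both funnel through $v$ (or that the pair of original endpoints available to $y$ is such that picking one of them dominates $y$'s separating power). A secondary subtlety is making sure the argument is compatible with the other rules that may have fired before or after Rule 8: since the observation only concerns the instant Rule 8 is applied and the lifting is done in reverse order of rule application, it suffices to reason locally at that moment, but I would state this carefully so the bookkeeping in the overall lifting procedure is unambiguous. Everything else is routine once the configuration is pinned down.
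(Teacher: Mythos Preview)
Your outline is plausible but diverges from the paper's argument in one structural way, and step~(5) hides a genuine obligation you have not discharged.

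The paper does \emph{not} replace $y$ by one of its neighbors in the reduced graph $G'$; it replaces $y$ by a vertex $v_1$ that was \emph{deleted} by Rule~8 (so $v_1\in V(G)\setminus V(G')$), and it argues directly that $(S'\setminus\{y\})\cup\{v_1\}$ is a feedback vertex set of the \emph{original} graph $G$. The engine is not a ``every cycle through $y$ also passes through $v$'' argument in $G'$; it is the observation that $G'\setminus\{y\}$ is obtained from $G\setminus\{v_1\}$ by a single degree-2 suppression (Rule~3 applied to the companion vertex $v_2$), so any FVS of the former is an FVS of the latter. The paper then handles the case $y\notin S'$ separately, noting that the parallel $y$--$w_1$ edges force $w_1\in S'$ and again using a Rule~3 identification between $G'\setminus\{w_1\}$ and $G\setminus\{w_1\}$.

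Your route---swap $y$ for a neighbor $v$ inside $G'$, stay in $G'$, then ``lift trivially''---can be made to work (choosing $v=w_1$, the parallel-edge neighbor, does dominate $y$ in $G'$), but your step~(5) is not free. Even once the kernel solution is $y$-free, you must still check that an FVS of $G'$ not containing $y$ is an FVS of $G$; Rule~8 deleted vertices and edges of $G$, so this is not automatic. That is exactly the paper's second case, and it again needs the Rule~3/suppression identification. So either way you end up needing the same key local observation; the paper's version just gets there in one step rather than two.
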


For Rule 9, we record in a structure the related vertices for each application of this rule, and store the structures in a list in the same order as we apply the rule.
To lift the solution to the original graph, we store all the vertices of the solution in a balanced search tree and then check the structures in the reverse order to see if the recorded vertices in a structure are also in the solution.  
If there are involved vertices in the solution, we modify the solution according to the reversed Rule 9. Since Rule 9 decreases the size of the graph, it can be applied at most $O(n)$ times.
So there are at most $O(n)$ structures to check, each of which contains only constant number of vertices.
To check if a vertex in the solution we need $O(\log n)$ time, so the total replacement can be done in $O(n \log n)$ time.
Then the lifting step first add back all vertices marked by the kernelization algorithm which can be done in linear time and then replace the introduced new vertices with original vertices, which needs at most $O(n \log n)$ time. So lifting can be accomplished in $O(n \log n)$ time.

\subsection{Polynomial-Time Approximation Scheme}\label{sec:ptas}
In this section, we introduce a PTAS for FVS in planar graphs, using linear kernel and balanced separator.

\subsubsection{Balanced Separator}\label{sec:sep}

A {\em separator} is a set of vertices, removing which will partition the graph into two parts.  A separator is {\em $\alpha$-balanced} if those two parts both have at most an $\alpha$-fraction of the original vertex set.  Lipton and Tarjan~\cite{LT79} first introduced a separator theorem for planar graphs, which says a planar graph with $n$ vertices admits a $\frac{2}{3}$-balanced separator of size at most $2\sqrt{2n}$, and they gave a linear-time algorithm to compute such a balanced separator.
This algorithm starts by computing a breadth-first search (BFS) tree for the graph, partitioning the vertices into levels. Then it tries to find the separator in three phases: 
\begin{enumerate}[(P1)] 
\item if there is any BFS level satisfying the requirements, then it is returned as a result; 
\item if there is no such level, then the algorithm tries to find two BFS levels that can form a balanced separator together;
\item if both previous phases fail, then the algorithm identifies two BFS levels and constructs a fundamental cycle\footnote{Given a spanning tree for a graph, a fundamental cycle consists of a non-tree edge and a path in the tree connecting the two endpoints of that edge.} to separate the subgraph between these two levels, such that the union of these two levels and the fundamental cycle form a balanced separator.
\end{enumerate}

Though we followed a textbook version~\cite{Kozen12} of this separator algorithm, our implementation still guarantees the $2\sqrt{2n}$ bound for the size of the separator.  
We remark that we did not apply heuristics in our implementation for the separator algorithm.  This is because we did not observe separator size improvement by some simple heuristics in the early stage of this work, and these heuristics may slow down the separator algorithm.  Since our test graphs are large (up to 2 million vertices) and we will apply the algorithm recursively in our PTAS, these heuristics may slow down our PTAS even more.

\subsubsection{PTAS for Planar FVS}
Lipton and Tarjan~\cite{LT80} designed a PTAS for maximum independent set in planar graphs using their balanced separator, which depends on the fact that the input is already a constant-factor approximation to the maximum independent set and contains the optimal solution.
Here we use the linear kernel as a proxy for the constant factor approximation that can be used to obtain a nearly optimal solution for FVS and relate the error parameter $\epsilon$ to the largest size of decomposed graphs instead of the size of the input graph as previous works~\cite{LT80, CNS81}.
This idea can be used for other problems admitting linear kernels in graph families admitting balanced separators:

\begin{enumerate}[(1)]
\item Compute a linear kernel $H$ for the original graph $G$, that is, $|V(H)|$ is at most $c_1 |OPT(H)|$ for some constant $c_1$. 
\item Decompose the kernel $H$ by recursively applying the separator algorithm and remove the separators until each resulting graph has at most $r$ vertices for some constant $r$. The union of all the separators has at most $c_2|V(H)| / \sqrt{r} = \epsilon |OPT(H)|$ vertices for $r$ chosen appropriately.
\item Solve the problem optimally for all the resulting graphs. 
\item Let $U_H$ be the union of all separators and all solutions of the resulting graphs. Lift $U_H$ to a solution $U_G$ for the original graph. 
\end{enumerate}

We obtain the following theorem, whose proof is provided in the appendix.
\begin{restatable}{theorem}{thmptas}
\label{thm:ptas}
There is an $O(n \log n)$ time PTAS for FVS in planar graphs.
\end{restatable}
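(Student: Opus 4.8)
The plan is to verify that the four-step procedure displayed above is a $(1+\epsilon)$-approximation running in $O(n\log n)$ time, and to pin down the piece-size threshold $r$ as a function of $\epsilon$. I would organize the argument into four claims: (i) $U_H$ is a feedback vertex set of the kernel $H$; (ii) $|U_H| \le (1+\epsilon)|OPT(H)|$; (iii) lifting turns $U_H$ into a feedback vertex set $U_G$ of $G$ with $|U_G| \le (1+\epsilon)|OPT(G)|$; and (iv) each step meets its time bound.

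For (i): once the union of all separators is deleted, $H$ decomposes into the vertex-disjoint pieces produced by the recursion, each with at most $r$ vertices, and deleting an optimal FVS from every piece leaves each piece — hence all of $H - U_H$ — acyclic. For (ii): I would bound $|U_H| \le |\text{separators}| + \sum_{P} |OPT(P)|$ over the pieces $P$; since $OPT(H)\cap P$ is itself a feedback vertex set of $P$ and the pieces are disjoint, $\sum_P |OPT(P)| \le |OPT(H)|$. For the separator term I would invoke the classical recursive estimate: decomposing an $m$-vertex planar graph using $\frac23$-balanced separators of size $2\sqrt{2m}$ down to pieces of size at most $r$ deletes $O(m/\sqrt{r})$ vertices in total; writing this as $c_2|V(H)|/\sqrt r$ and substituting the linear-kernel bound $|V(H)| \le c_1|OPT(H)|$ yields a separator total at most $(c_1 c_2/\sqrt r)\,|OPT(H)|$, which is at most $\epsilon |OPT(H)|$ as soon as $r \ge (c_1 c_2/\epsilon)^2$ — a constant for fixed $\epsilon$. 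Combining the two terms gives $|U_H| \le (1+\epsilon)|OPT(H)|$. For (iii): I would use that the reduction rules are exact in the sense that $|OPT(G)| = |OPT(H)| + m$, where $m$ is the number of vertices marked during kernelization, and that the lifting procedure of Section~\ref{sec:kernel} (with Observation~\ref{obs:y} handling the new vertex of Rule~8 and the reverse-Rule-9 bookkeeping) produces $U_G$ with $|U_G| \le |U_H| + m$; then $|U_G| \le (1+\epsilon)|OPT(H)| + m \le (1+\epsilon)(|OPT(H)|+m) = (1+\epsilon)|OPT(G)|$.

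For (iv): kernelization runs in expected $O(n)$ time, or $O(n\log n)$ with the balanced-search-tree variant actually implemented; the recursive separator decomposition satisfies $T(m) = T(m_1)+T(m_2)+O(m)$ with $m_1+m_2\le m$ and $m_1,m_2\le\frac23 m$, which solves to $O(m\log m)$ since there are $O(\log m)$ recursion levels carrying $O(m)$ work each; solving each of the $O(n)$ pieces optimally costs $O(1)$ because $r=r(\epsilon)$ is constant (brute force over $2^r$ subsets), for $O(n)$ total; and lifting costs $O(n\log n)$ by the analysis in Section~\ref{sec:kernel}. Summing, the whole algorithm runs in $O(n\log n)$ time for every fixed $\epsilon$.

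I expect the main obstacle to be the total-separator-size estimate $O(|V(H)|/\sqrt r)$: although it is classical, making it rigorous requires the right recursive bookkeeping — the ``small'' branches of the decomposition also contribute separator vertices — and it is precisely this bound that lets us tie $\epsilon$ to the constant $r$ rather than to $n$, which is the whole point of the construction. A secondary point requiring care is that a \emph{decision} kernel is being reused for an \emph{optimization} problem, i.e.\ establishing the identities $|OPT(G)| = |OPT(H)| + m$ and $|U_G| \le |U_H| + m$, which is exactly what the lifting discussion and Observation~\ref{obs:y} are designed to guarantee.
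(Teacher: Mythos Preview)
Your proposal is correct and follows essentially the same route as the paper's proof: the same decomposition of $|U_H|$ into separator mass plus $\sum_P|OPT(P)|$, the same choice $r=(c_1c_2/\epsilon)^2$, and the same lifting inequality $|U_G|-|U_H|\le |OPT(G)|-|OPT(H)|$ (which the paper isolates as a separate lemma, classifying the reduction rules into three types---do nothing, mark a forced vertex, or replace a subgraph size-neutrally---to justify exactly your identities $|OPT(G)|=|OPT(H)|+m$ and $|U_G|\le|U_H|+m$). The running-time accounting also matches; the paper states the total as $O(2^{c}n+n\log n)$ with $c=O(1/\epsilon^2)$, which is your bound with the constant made explicit.
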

More specifically, the running time is $O(2^cn+ n \log n)$ where $c=O(1/\epsilon^2)$ and $\epsilon$ is the error parameter.

\subsection{Heuristics}
In this subsection, we introduce some heuristics that can help improve the quality of the FVS solution. We first provide two heuristics that improve the quality of our PTAS solutions. Then we introduce a hybrid algorithm that combines the greedy method of the 2-approximation algorithm and the reduction rules of the kernelization algorithm. Finally, we use local search to improve the solution from any algorithm.
Our proposed Heuristic Approximation Scheme is a combination of the hybrid algorithm and the local search heuristic.

\subsubsection{Heuristic Improvements to PTAS}\label{sec:heu-ptas}
The solution from our PTAS may not be a minimal one, so 
we use the post-processing step from the 2-approximation algorithm to convert the final solution of the PTAS to a minimal one. This involves iterating through the vertices in the solution and trying to remove redundant vertices from the solution while maintaining feasibility. In fact, we only need to iterate through the vertices in separators, since vertices in the optimal solutions of small graphs are needed for feasibility. 

We additionally apply the kernelization algorithm right after we compute a separator in Step (2).  
Note that there is a decomposition tree corresponding to the decomposition step, where each node corresponds to a subgraph in the decomposition step. To apply the second heuristic, we need to record the whole decomposition tree with all the corresponding separators such that we can lift the solutions in the right order. 
For example, if we want to lift a solution for a subgraph $G_w$ corresponding to some node $w$ in the decomposition tree, we first need to lift all solutions for the subgraphs corresponding to the children of node $w$ in the decomposition tree.

\subsubsection{Hybrid Algorithm}\label{sec:hybrid}
We notice the cost for detecting applicable reduction rules for a candidate vertex is relatively low ($O(\log n)$ time), and each of these rules can reduce either the size of the graph or the size of the optimal solution. 
It is beneficial, therefore, to apply them as much as possible.  
When there are no applicable reductions in the current graph, we can remove a vertex greedily just as the 2-approximation algorithm does, and this will change the current graph such that there may be applicable reductions again.
Based on this idea, we propose a hybrid algorithm which interleaves the 2-approximation algorithm and the kernelization algorithm:
\begin{enumerate}[(i)]
\item Compute a temporary solution by repeating the following two steps until the graph is empty. 
\begin{enumerate}[(a)]
\item Apply reduction rules from the kernelization algorithm in order until there are no applicable rules.
\item Remove a vertex of highest degree from the graph and add it into the temporary solution.
\end{enumerate}
\item Lift the temporary solution to a feasible solution for the original graph and then compute a minimal solution by removing redundant vertices from this solution.
\end{enumerate}

The running time of the first step is similar to the running time of the kernelization algorithm since the greedy step can be seen as another ``rule'' added into the kernelization algorithm and it can be done in $O(\log n)$ time if we store the degree information in a binary search tree.
So the first step runs in $O(n \log n)$ time. 
The lifting step needs $O(n \log n)$ time, and to compute a minimal solution we need $O(n \log n)$ time as done in Becker and Geiger's 2-approximation algorithm.
So the total running time of our hybrid algorithm is $O(n \log n)$.

\subsubsection{Local Search}\label{sec:ls}
In our companion paper~\cite{LZ18}, we show that local search gives a PTAS for FVS in $H$-minor-free graphs. The algorithm is not practical, with running time $n^{O(1/\epsilon^2)}$.
We relax the conditions of the algorithm here.
Assume we are given a feasible FVS solution $U$ for a planar graph $G$, and we would like to improve this solution.
To achieve this goal, we propose a local search heuristic. Assume we have a fixed parameter tractable (FPT) algorithm {\it FA} for the FVS problem (either for planar graphs or for general graphs).
Then our local search with size $k$ consists of the following two steps.
\begin{enumerate}[(S1)]
\item Select a subset $X$ of size $k$ from $U$ randomly.
\item Run the algorithm {\it FA} on graph $G \setminus (U \setminus X)$. Stop {\it FA} if a solution is not found in a reasonable amount of time. If {\it FA} finds a solution $Y$ in graph $G \setminus (U \setminus X)$, then return $(U\setminus X) \cup Y$; otherwise return $U$. 
\end{enumerate}

The solution $(U \setminus X) \cup Y$ is a feasible solution for graph $G$ since
$(G \setminus (U \setminus X) ) \setminus Y$ is a forest and  $(G \setminus (U \setminus X) ) \setminus Y = G \setminus ((U \setminus X) \cup Y)$.

\section{Experiments}
In this section, we evaluate the performance of the algorithms described in the last section. We implemented those algorithms in \texttt{C++} and the code is compiled with \texttt{g++} (version 4.8.5) on the \texttt{CentOS} (version 7.3.1611) operating system. Our PTAS implementation is built on Boyer's implementation\footnote{http://jgaa.info/accepted/2004/BoyerMyrvold2004.8.3/planarity.zip} of Boyer and Myrvold's planar embedding algorithm~\cite{BM04}.
In our experiments, we also use the implementation\footnote{https://github.com/wata-orz/fvs} of Iwata and Imanishi for FVS in general graphs, which is implemented in \texttt{java} and includes a linear-time kernel~\cite{Iwata16} and a branch-and-bound based FPT algorithm~\cite{IWY16} for FVS in general graphs.
The \texttt{java} version in our machine is 1.8.0 and our experiments were performed on a machine with Intel(R) Xeon(R) CPU (2.30GHz) running \texttt{CentOS} (version 7.3.1611) operating system.

To test the algorithms, we collect five different classes of graphs:
\begin{itemize}
\item {\bf pace} are the planar graphs used in PACE (The Parameterized Algorithms and Computational Experiments Challenge) 2016 Track B: Feedback Vertex Set;
\item {\bf random} are random planar graphs generated by \texttt{LEDA} (version 6.4)~\cite{MNU97};
\item {\bf triangu} are triangulated random graphs generated by \texttt{LEDA}, whose outer faces are not triangulated;
\item {\bf grid} are rectangular grid graphs;
\item The graphs {\bf NY}, {\bf BAY}, {\bf COL}, {\bf NW}, {\bf CAL}, {\bf FLA}, {\bf LKS}, {\bf NE}, {\bf E} and {\bf W} are road networks used in the 9th DIMACS Implementation Challenge—Shortest Paths~\cite{DGJ08}.  We interpret each graph as a straight-line embedding and we add vertices whenever two edges intersect geometrically.
\end{itemize}
Since we are interested in the performance of the algorithms in large planar graphs, the synthetic graphs we generated have at least 450000 vertices. And the real network graphs have at least 260000 vertices.
Although the {\bf pace} graphs are smaller than that, we only use them to evaluate the 2-approximation algorithm since we can obtain optimal solutions of those small graphs.
All the detailed experimental results, including solution sizes and running time, are also provided in the appendices.

\subsection{The 2-Approximation Algorithm and Optimal Solution}\label{sec:2app-opt}

To evaluate Becker and Geiger's 2-approximation algorithm~\cite{BG96}, we compare its solution with the optimal solution on graphs up to size 2000000. The optimal solution is obtained by applying the kernelization algorithm first and then the FPT algorithm implemented by Iwata and Imanishi. 
For each test graph, we run Iwata and Imanishi's implementation for 30 seconds and stop it if it cannot terminate. 
Among all the test graphs, this method can solve 21 graphs of the 31 {\bf pace} graphs, 9 graphs of the 10 {\bf random} graphs.
Although we cannot solve the large rectangular grid graphs by this method, we can apply Luccio's~\cite{Luccio98} lower bound for the optimal solution in rectangular grids here. 
We observe that the solutions obtained by 2-approximation algorithm are very close to the optimal solutions for these test graphs. For {\bf pace} graphs, the 2-approximation algorithm can solve 14 graphs optimally and the difference between the two solutions is at most three. The approximation ratio over all these graphs is at most 1.143, and this ratio is at most 1.001 for the {\bf grid} graphs and at most 1.006 for these {\bf random} graphs.

\subsection{The PTAS and 2-Approximation Algorithm}\label{sec:ptas-2app}

Recall that the third step of our PTAS is to solve the problem on all the decomposed graphs optimally, which needs an exact algorithm for the problem.
The trivial exact algorithm that enumerates each possible vertex subset can only solve a graph of size about 25 in a few seconds, with which the PTAS may not be able to give competitive solutions for large graphs.
So we apply the exact algorithm described in the last subsection, which combines the kernelization algorithm and the FPT algorithm implemented by Iwata and Imanishi.
In the early stage of this work, this exact algorithm did not find a solution for a {\bf pace} graph of size 66 in 30 seconds.
So we first set as 60 the largest size $r$ of the decomposed graphs and let the FPT algorithm run for at most 15 seconds.
In this setting, all the decomposed graphs can be solved optimally in our experiments and we can evaluate the heuristics proposed for our PTAS.  
To do that, we compare three variants of our PTAS:
\begin{itemize}
\item the {\bf vanilla} variant is the vanilla version of our PTAS, for which no heuristic is applied;
\item the {\bf minimal} variant applies the post processing heuristic to our PTAS, which will remove redundant vertices in separators;
\item the {\bf optimized} variant applies both heuristics to our PTAS, which will apply kernelization algorithm whenever each separator is computed and removed during the decomposition step and always returns a minimal final solution.
\end{itemize}

The result is illustrated in Figure~\ref{fig:ptas}, where the solution size is normalized by the 2-approximation algorithm solutions. 
We observe that for the road network graphs, {\bf random} graphs and {\bf triangu} graphs, the {\bf optimized} variant provides the best solutions among the three algorithms, which implies the two heuristics both help improve the solutions.
However, for the {\bf grid} graphs, the {\bf minimal} variant gives the best results, which means the kernelization algorithm is not very helpful. We think this is because a large rectangular grid graph itself is already a $4k$-kernel by the lower bound for the optimal solution~\cite{Luccio98}, and the kernelization algorithm can only remove four vertices from such graphs.
For the {\bf random} graphs (not pictured in Figure~\ref{fig:ptas}), the improvement from the two heuristics is mild, but the obtained solutions are already very close to the optimal solutions, that is the differences are less than 60 vertices when the solutions have more than 190000 vertices. 
We also find that the post processing heuristic will not affect the running time by much, while the kernelization heuristic can increase the running time at most by a factor of 5.

\begin{figure}

\centering
\includegraphics[scale = 0.52]{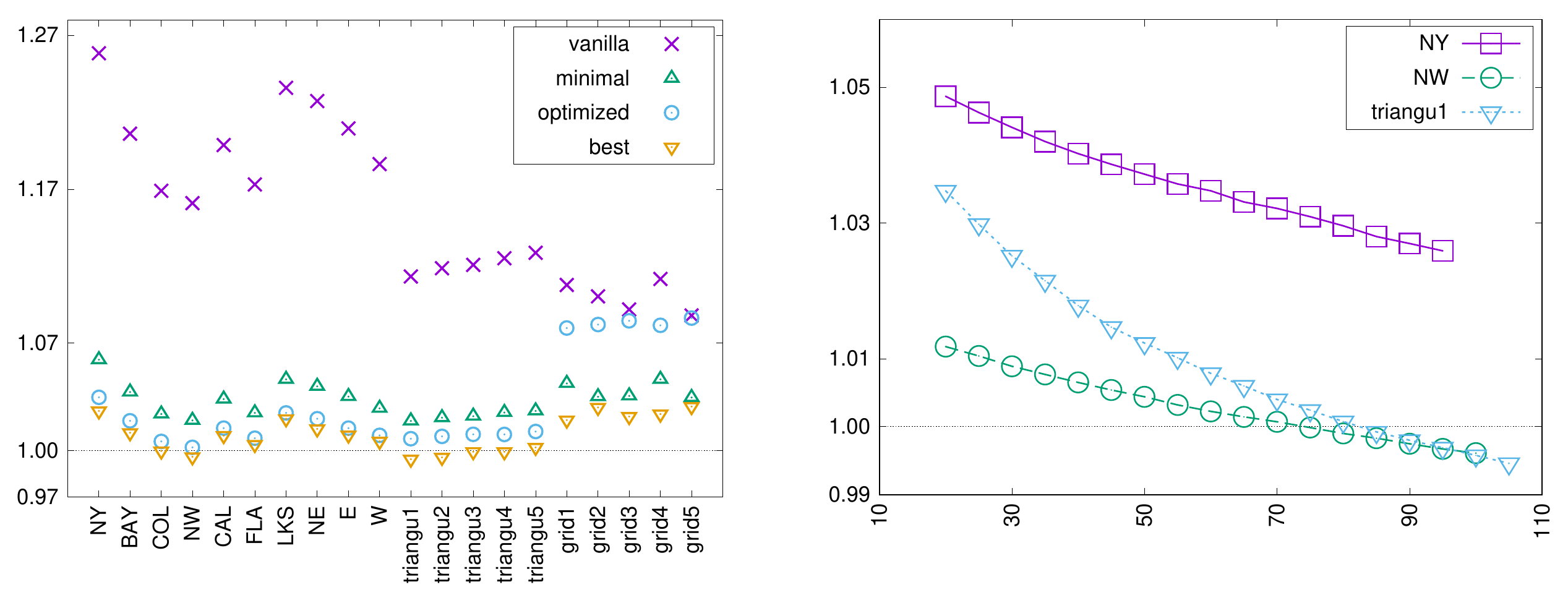}
\caption{Results of our PTAS implementation. The Y axis represents the solution size normalized by the solutions of 2-approximation algorithm.
Left: solutions of the three variants of our PTAS with $r=60$ and the best solutions of our PTAS with largest possible values of $r$. 
Right: effect of parameter $r$ on the solution sizes where X axis represents the value of $r$.
}
\label{fig:ptas}
\end{figure}

Recall that the largest size $r$ of the decomposed graphs is the only parameter in our PTAS.
Now we analyze the effect of this parameter on the performance of our PTAS.
Since the {\bf optimized} variant works best for most of the test graphs, we focus on its performance affected by parameter $r$. 
We start with $r=20$ and each time increase it by 5 until our implementation cannot compute a feasible solution, which is caused by the fact that the FPT implementation cannot solve some decomposed graphs of size $r$ in the given time.
The result is shown in Figure~\ref{fig:ptas}.
We can see in the figure that our implementation can solve the instance for relatively large value of $r$, and the solution is improved when the parameter $r$ increases.
This is because when $r$ is bigger, the total size of the separators, which is the error part of our PTAS, is smaller.
Since we set a time limit for the FPT implementation, the effect of parameter $r$ on the running time is not significant, although we observed a mildly increasing tendency on the running time when $r$ increases.

Now we can compare our PTAS with the 2-approximation algorithm. 
Based on the above results, we know the largest value of $r$ may be different for different graphs. So to get the best result, we start with $r=60$ and each time increase it by 5 until we cannot find feasible solution.
Since different variants work best for different graph classes, we choose the best variant for each graph class, that is, we apply the {\bf minimal} variant for {\bf grid} graphs, and the {\bf optimized} variant for other graphs.
The largest value $r$ we find varies from 80 to 125 for our test graphs.
The final result is plotted in Figure~\ref{fig:ptas} marked as {\bf best}.
We can see that although our PTAS implementation is competitive with the 2-approximation algorithm on all graphs, it cannot outperform the 2-approximation solution on most road network graphs.
The reason is that the subgraphs for which we can solve the problem optimally are still not large enough,
which results in a fact that the separator fraction in the solution is large.  
On the {\bf grid} graphs, the 2-approximation algorithm outperforms our PTAS by about 2 percent.
On four of the five {\bf triangu} graphs, our PTAS is slightly better than the 2-approximation algorithm, where the difference for each graph is less than 1 percent.
While on the {\bf random} graphs the improvement is mild, the final solutions are very close to the optimal.
We also find the running time of our PTAS is much longer than the 2-approximation algorithm.
Specifically, the running time of the 2-approximation algorithm varies from a few seconds to a few minutes for our test graphs, while our PTAS needs a few hours to finish for the large graphs, where most time is spent on running the FPT algorithm.

\subsection{Heuristic Approximation Scheme}\label{sec:heu-2app}
We evaluate the performance of the following two algorithms and compare them with the 2-approximation algorithm ({\bf 2approx}).
\begin{itemize}
\item The {\bf hybrid} algorithm combines the kernelization reduction rules and the greedy step of the {\bf 2approx} to find a solution, and then lifts it to a minimal feasible solution for the original graph.
\item {\bf HAS} first computes a solution by the {\bf hybrid} algorithm and then applies the local search heuristic with an FPT algorithm to improve the solution quality.
\end{itemize}

In our following experiments, the FPT algorithm used in the local search consists of two parts: Bonamy and Kowalik's linear kernelization algorithm and Iwata and Imanishi's implementation, which is run for at most 15 seconds.

Before showing the final comparison of these algorithms, we first need to optimize their parameters.  
For the {\bf hybrid} algorithm, there is a potential parameter: how often should the kernelization reduction rules be applied?
We evaluated different values from 1 to 100 for this frequency parameter to understand its effect on the solution size.  
We find that for the road network graphs the difference is at most 0.2 percent of the solution size and for the other test graphs the difference is less than 0.01 percent.
So we only consider the road network graphs to optimize this parameter.
Although we did not find an optimal value for this parameter that could always give smallest solution, we can avoid some values that always give larger solutions.
For this goal, we choose the frequency as 41 in our later experiments, that means we apply the reduction rules after removing 41 vertices greedily.

\begin{figure}
\centering
\includegraphics[scale = 0.52]{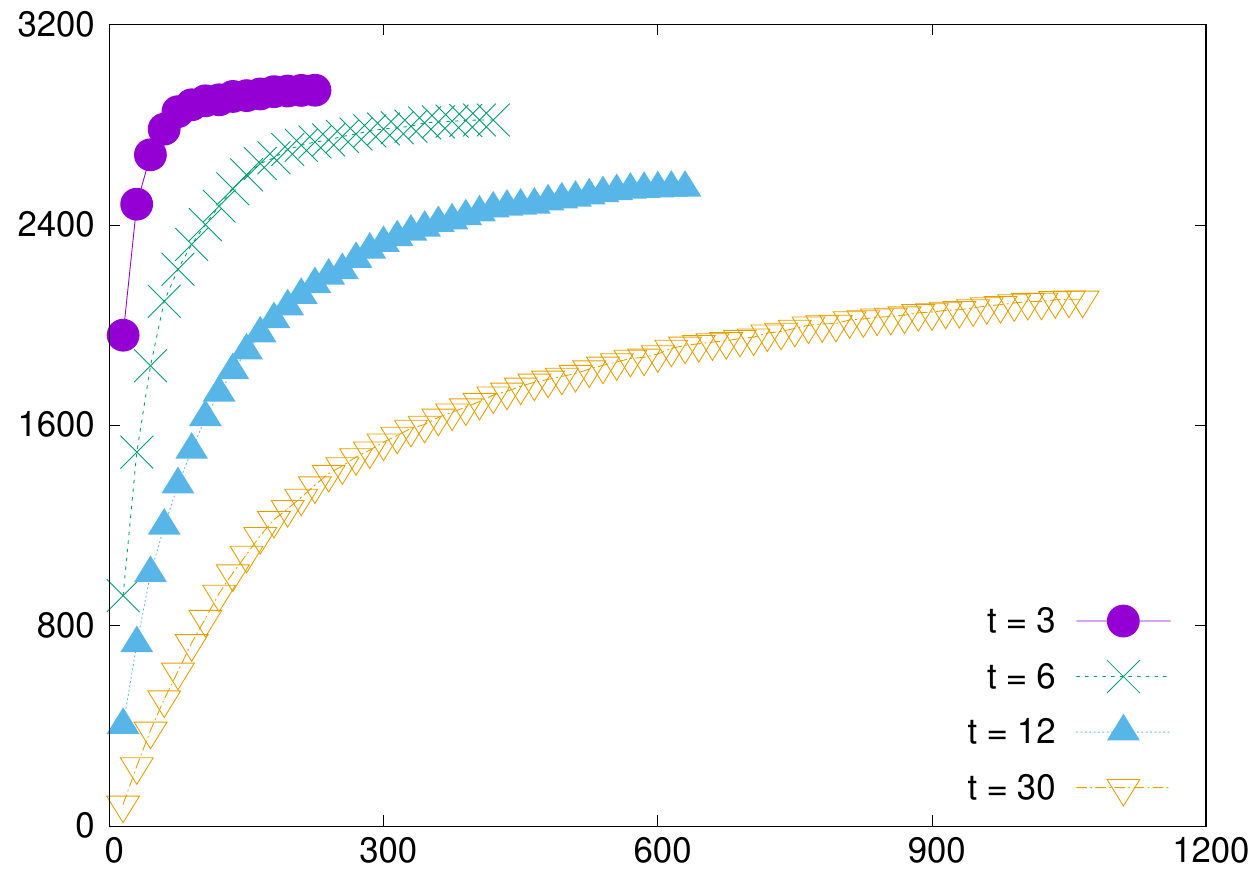}
\includegraphics[scale = 0.52]{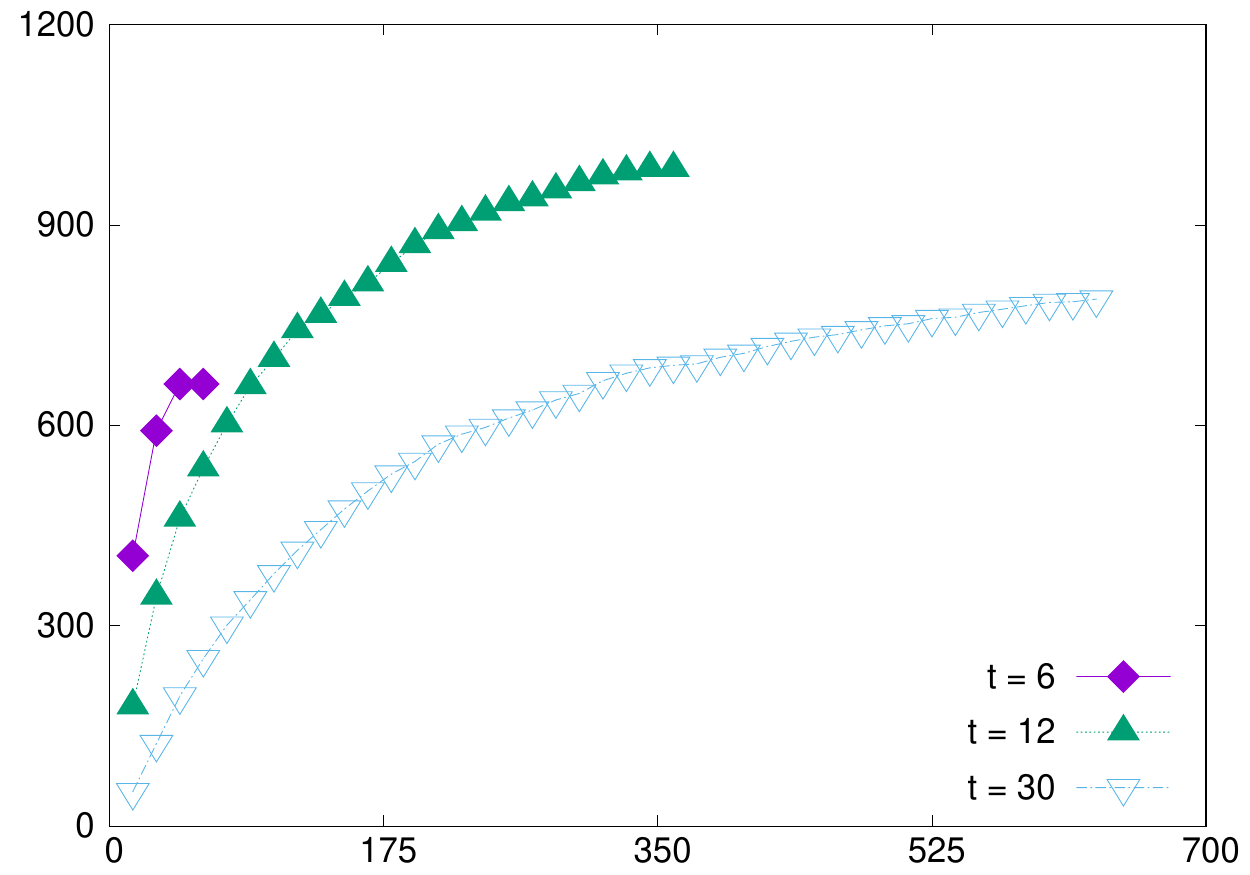}
\caption{Local search improvement affected by different values of $t$. 
The X axis represents the number of local search iterations and the Y axis represents the number of vertices the initial solution is improved by.
Left: results for graph {\bf triangu1}. Right: results for graph {\bf COL}. 
}
\label{fig:ls1}
\end{figure}

For the local search heuristic in {\bf HAS}, the only parameter we need to optimize is the size $k$ of the random set.
We choose this parameter as a $1/t$ fraction of the solution size for an integer $t$,  
so that we only need to optimize the value of $t$.
We evaluated four different values (3, 6, 12, 30) for $t$ and terminate the process when there is no improvement for twelve consecutive search rounds.
Figure~\ref{fig:ls1} shows the results for two graphs.
We can see in the figures that the total improvement is not always monotone with $t$ in this range.
For the synthetic graphs like {\bf triangu1}, the total improvement for $t=3$ is the largest, while for most road network graphs, the largest total improvement is obtained for $t=12$.
The former phenomenon can be explained by our companion work~\cite{LZ18}, which shows local search is a PTAS for FVS in minor-free graphs so the solution will be better when the size of the random set replaced is larger.  
However, this will not hold when the FPT algorithm cannot solve the problem in a reasonable amount of time for large random set, which corresponds to the smaller $t$, and then the improvement will be limited.
This could explain the results for the road network graphs.
Moreover, while local search with larger value of $k$ tends to give better improvements, the number of local search iterations is bigger for smaller value of $k$, which implies the improvement is consecutive and stable.
Based on these observations, we will iterate through the values of $t$ in an increasing order to maximize the improvement in our later experiments, that is, we will increase the value of $t$ by 3 when there are six consecutive search rounds that find no improvement. The range of $t$ is still from 3 to 30.

To better understand the local search heuristic, we illustrate the improvement fraction and running time fraction distributed on different values of $t$ in Figure~\ref{fig:ls2}, where the running time is represented by the number of local search iterations.  We observe that for the synthetic graphs, represented by {\bf triangu5} in the figure, the total improvement comes from the search with $t=3$, while for the road network graphs, the improvement distributes on different small values of $t$.  The distribution of the running time has the similar tendency as the improvement.

We notice that the local search heuristic in {\bf HAS} can give us a ``PTAS behavior'', that is we can obtain better solutions if we spend more time doing so.
So one natural question to ask is how long can that improvement process last? 
To answer this, we tried looping the values of $t$ in the range [3, 30], and found that only one iteration over the range is enough, and that additional iterations only give minor improvement.

\begin{figure}
\centering
\includegraphics[scale = 0.52]{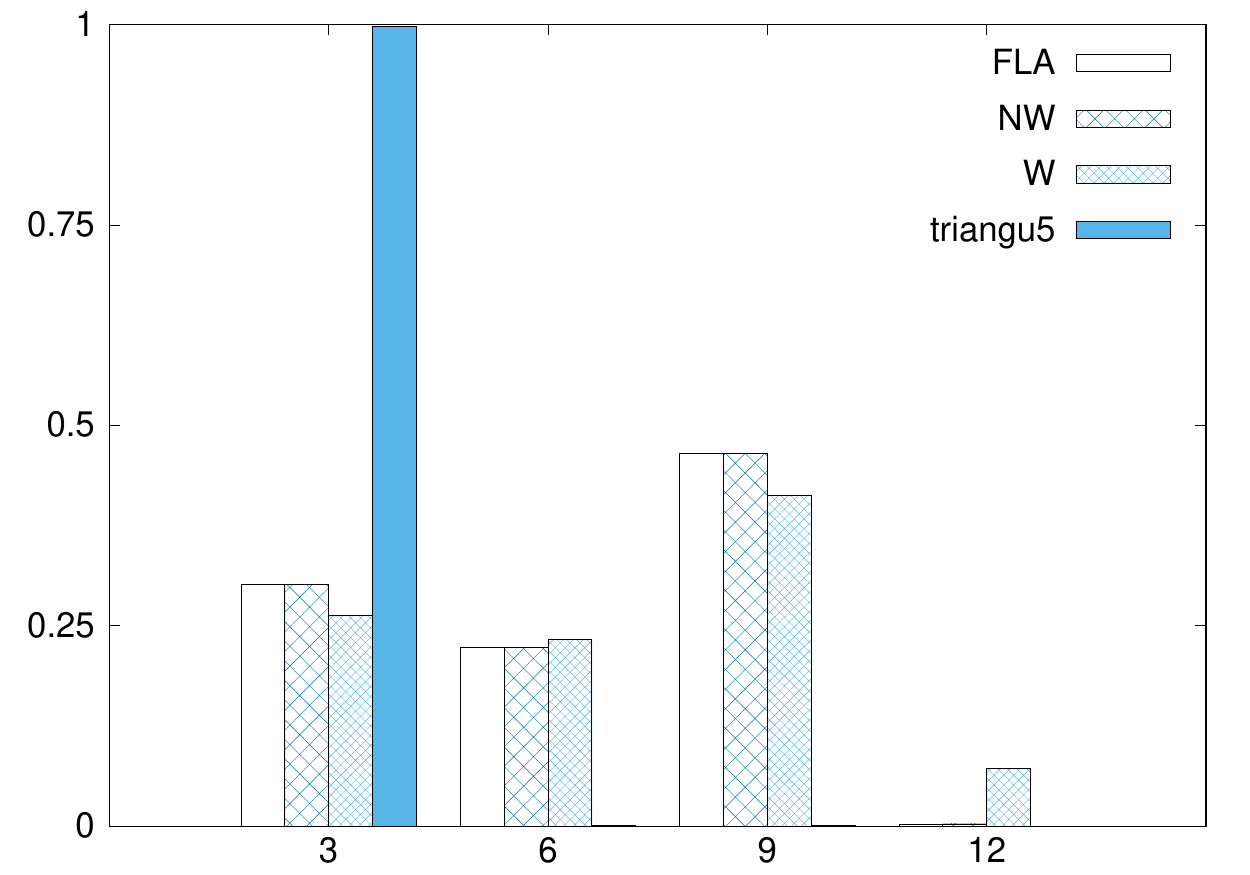}
\includegraphics[scale = 0.52]{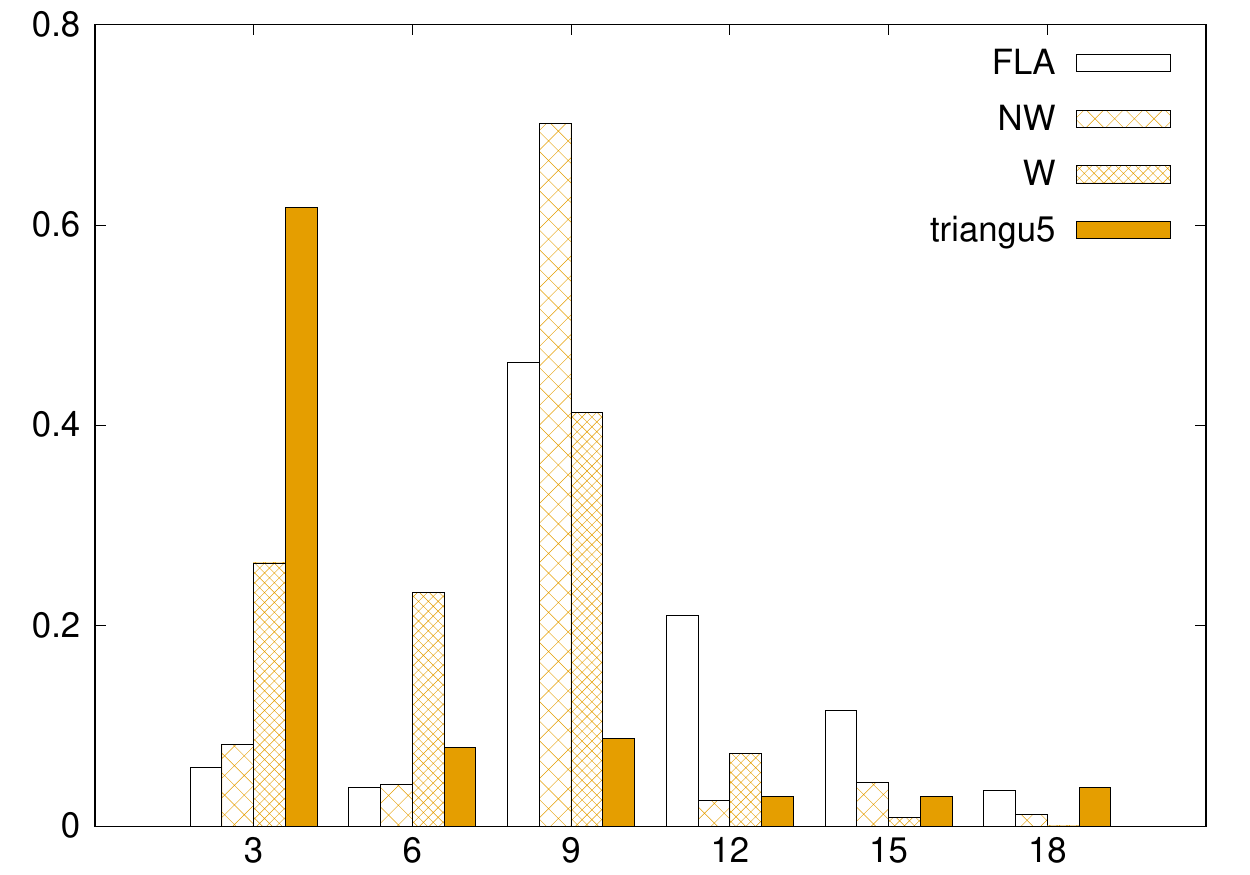}
\caption{Local search improvement and number of iterations distributed on different values of $t$.  The X axis is the value of $t$.  Left: the Y axis represents the fraction of total improvement. Right: the Y axis represents the fraction of the total iterations. 
} 
\label{fig:ls2}
\end{figure}

Now we can set the parameters and compare these algorithms. 
For each graph, we run {\bf HAS} five times, each of which is run with a different random seed, and compute the average of the five solutions.
Figure~\ref{fig:total} illustrates the results where each solution size is normalized by the {\bf 2approx} solution.
We can see in the figure that for most graphs, {\bf hybrid} finds better solutions than {\bf 2approx}, and {\bf HAS} gives the best solutions. 
The improvement of {\bf HAS} for all road network graphs is more than 5 percent, which could be over 30000 vertices for large graphs like {\bf W}.
For all {\bf triangu} graphs the improvement of {\bf HAS} is at least 3 percent. 
For {\bf random} graphs the improvement is not very significant, this is because the final solutions are already very close to the optimal solutions.
Since the {\bf 2approx} works very well on the {\bf grid} graphs as shown before, it is hard to outperform it on these graphs though our heuristics is able to find competitive solutions on these graphs.
Although the improvement of {\bf HAS} is significant, its running time is relatively long compared with the other two algorithms.
For example, both of the {\bf 2approx} and {\bf hybrid} can terminate in a few minutes for graph {\bf W}, but {\bf HAS} needs more than 35 hours to terminate for this graph.
For this, people may need to balance the running time and the solution quality by setting a proper number of local search iterations.

\begin{figure}
\centering
\includegraphics[scale = 0.8]{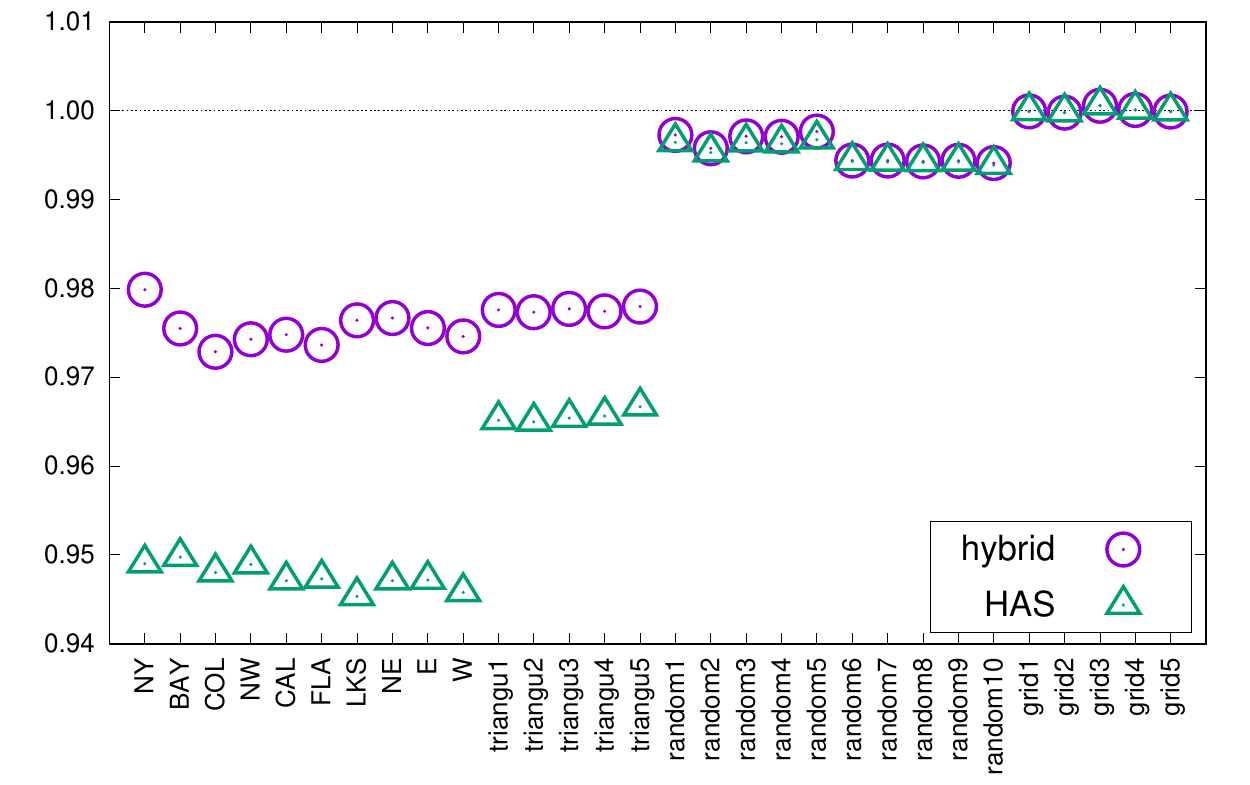}
\caption{Results of {\bf hybrid} and {\bf HAS}. The Y axis represents the solution sizes normalized by the solutions of the 2-approximation algorithms. The solution of {\bf HAS} is the average over the five runs with different random seeds.}
\label{fig:total}
\end{figure}

\section{Conclusions}
We proposed an $O(n \log n)$ time PTAS and a heuristic algorithm for the minimum feedback vertex set problem in planar graphs. We also evaluated their performance and compare them with the 2-approximation algorithm. Our results show that our PTAS is competitive with the 2-approximation algorithm, and our heuristic algorithm can outperform the 2-approximation on both synthetic graphs and real-world graphs. 
We remark that we can also obtain an $O(n \log n)$ time PTAS for FVS in planar graphs by applying Baker's technique~\cite{Baker94} for the linear kernel, but then we need to compute a branch decomposition and handle the dynamic programming in the branch decomposition as done in previous PTAS engineering works~\cite{TM11, BFKM17}. It will be interesting to see if this kind of PTAS can outperform the 2-approximation and our heuristic algorithm.

Although we focus on FVS in this work, we believe the ideas behind our heuristics can also work for other problems. For example, the idea of combining reduction rules and another approximation algorithm can also be applied to other problems like dominating set and vertex cover. Similarly, our local search heuristic can also be generalized to new problems if there are FPT algorithms for them and it will be interesting to see how do they work on different problems.

\bibliography{fvs_ref}

\newpage
\appendix

\section{Omitted Proofs}

\begin{proof}[Proof of Obervation~\ref{obs:y}]
\begin{figure}
\centering
\includegraphics[scale=1.2]{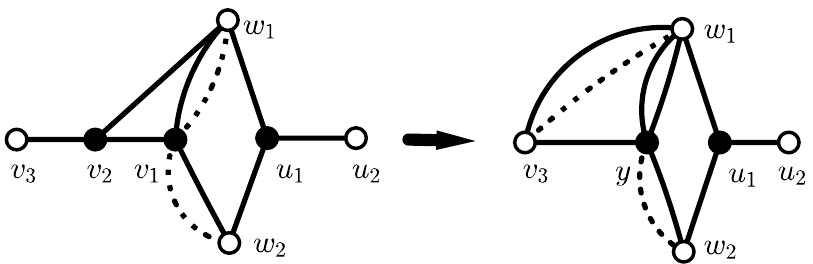}
\caption{Reduction rule 8 replaces the left subgraph with the right subgraph. All dashed edges are optional, and all incident edges of black vertices are drawn as solid or dashed, while the white vertices may have other edges in the graph not drawn in the figure.}
\label{fig:r8}
\end{figure}

Rule 8 is illustrated in Figure~\ref{fig:r8}, which will not modify the size of the optimal solution.
We will show the new vertex $y$ can be replaced by the vertex $v_1$ in the resulting graph after applying this rule.
Let $G$ and $G'$ be the graph before and after applying Rule 8, and let $S'$ be an FVS solution for $G'$.  
We only need to prove the following:
if $y \in S'$, then $(S'\setminus \{ y \}) \cup \{ v_1 \}$ is an FVS solution for $G$; otherwise $S'$ is an FVS solution for $G$.

Assume $y \in S'$. Then we claim that any FVS solution for $G' \setminus \{ y \}$ is also an FVS solution for $G \setminus \{v_1 \}$.
This is because we can obtain $G' \setminus \{ y \}$ from $G \setminus \{v_1\}$ by applying Rule 3 (that is, if a vertex has degree two and its incident edges are not parallel edges, we remove this vertex and add an edge between its two neighbors) to vertex $v_2$, which will not affect the size of the optimal solution.
Then we know $G \setminus (\{ v_1 \} \cup (S' \setminus \{y \}))$ is a forest, which implies $(S' \setminus \{y \}) \cup \{ v_1\}$ is an FVS for $G$. 

Now assume $y \notin S'$. Then we know $w_1 \in S'$ since there are parallel edges between $y$ and $w_1$. We claim that any FVS solution for $G' \setminus \{w_1 \}$ that does not contain $y$ is also an FVS for $G\setminus \{w_1\}$. This is because after applying Rule 3 to $v_2$ in $G\setminus \{w_1\}$ the resulting graph is the same as $G'\setminus \{w_1\}$ except that the label for the vertex $v_1$ is $y$ in $G' \setminus \{w_1\}$.
Since $y \notin S'$, we know $S' \setminus \{w_1\}$ is an FVS for $G \setminus \{w_1\}$, which implies $S'$ is an FVS for $G$.
\end{proof}

We need the following lemma to prove Theorem~\ref{thm:ptas}.

\begin{lemma}\label{lem:x}
Given any FVS solution $U_H$ for a linear kernel $H$ obtained by Bonamy and Kowalik's algorithm, we can obtain an FVS solution $U_G$ for the original graph $G$ such that $|U_G| - |U_H| \le |OPT(G)| - |OPT(H)|$.
\end{lemma}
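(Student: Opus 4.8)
The plan is to analyze the lifting process rule by rule, tracking how each reduction step of Bonamy and Kowalik's algorithm changes both the size of an arbitrary FVS solution and the size of an optimal FVS solution, and to argue that lifting never loses ground in the inequality $|U_G| - |U_H| \le |OPT(G)| - |OPT(H)|$. Since the kernel $H$ is produced by a finite sequence of reduction rules $G = G_0 \to G_1 \to \cdots \to G_m = H$, it suffices by induction to prove the one-step statement: if $G_{i+1}$ is obtained from $G_i$ by a single rule, and $U_{i+1}$ is an FVS for $G_{i+1}$, then the lifted solution $U_i$ for $G_i$ satisfies $|U_i| - |U_{i+1}| \le |OPT(G_i)| - |OPT(G_{i+1})|$. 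Composing these inequalities over all $m$ steps (and noting the marked vertices are added back disjointly) yields the lemma.

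First I would classify the rules into three types. Type A: rules that neither introduce new vertices nor mark vertices, and for which $|OPT(G_i)| = |OPT(G_{i+1})|$; here lifting is the identity ($U_i = U_{i+1}$), so both sides of the one-step inequality are $0$. Type B: rules that mark a vertex $z$ forced into the optimum and delete it (or delete some structure), so $|OPT(G_i)| = |OPT(G_{i+1})| + j$ for the number $j$ of vertices marked/removed; here lifting sets $U_i = U_{i+1} \cup \{\text{marked vertices}\}$, so $|U_i| - |U_{i+1}| = j = |OPT(G_i)| - |OPT(G_{i+1})|$ and the inequality holds with equality. Type C: the two rules (Rule 8 and Rule 9) that introduce a new vertex. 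For Rule 8, Observation~\ref{obs:y} already tells us the new vertex $y$ can be replaced by an original vertex $v_1$, and the observation's proof shows this replacement produces a valid FVS for $G_i$ of size $|U_{i+1}|$, while Rule 8 does not change $|OPT|$; so again the inequality holds (with $0 \le 0$). For Rule 9 the situation is the one sketched in Section~\ref{sec:kernel}: we recorded the constant-size structure of involved vertices, and lifting modifies the solution according to the reversed rule. The key point to verify is that this reversed-Rule-9 modification adds at most as many vertices to the solution as the rule removed from the optimum, i.e. $|U_i| - |U_{i+1}| \le |OPT(G_i)| - |OPT(G_{i+1})|$, by a direct case analysis on which of the recorded vertices lie in $U_{i+1}$.

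The main obstacle I expect is precisely the Rule 9 case: unlike the clean "mark and delete" rules, Rule 9 both introduces a vertex and changes $|OPT|$ by a fixed amount, so one must carefully check every sub-case of the local replacement (depending on which recorded vertices appear in $U_{i+1}$) and confirm that in the worst sub-case the number of vertices we are forced to add back to reconstruct feasibility in $G_i$ does not exceed the drop in optimum value that the rule was designed to account for. This requires recalling the exact statement of Rule 9 from~\cite{BK16} and its associated $|OPT|$ accounting. Once that single rule is dispatched, the rest is the routine telescoping induction described above: summing $|U_i| - |U_{i+1}| \le |OPT(G_i)| - |OPT(G_{i+1})|$ over $i = 0, \dots, m-1$ gives $|U_G| - |U_H| \le |OPT(G)| - |OPT(H)|$, which is the claim.
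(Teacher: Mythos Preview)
Your proposal is correct and follows essentially the same route as the paper: both arguments classify the reduction rules by their effect on $|OPT|$ and on the lifted solution, then telescope the per-rule inequalities. The one simplification you are missing is that in Bonamy and Kowalik's scheme Rule~9, like Rule~8, does \emph{not} change $|OPT|$; the paper therefore places both Rules~8 and~9 in a single ``replace some solution vertices without increasing the size'' bucket, so the Rule~9 obstacle you anticipate reduces to checking that its reverse step never enlarges the current solution---no separate accounting against a drop in $|OPT|$ is needed.
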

\begin{proof}
We can classify the reduction rules in Bonamy and Kowalik's kernelization algorithm into three types according to their effects on the optimal solution.
\begin{itemize}
\item Do not affect the optimal solution, such as removing vertices of degree one.
\item Remove some vertices from the graph that must be in the optimal solution $OPT(G)$, such as removing vertices with self-loops.
\item Add some new vertices into the graph without changing the size of the optimal solution, such as replacing a subgraph with another subgraph that has some new vertices.
\end{itemize}
When lifting the solution of $H$ to that of $G$, we have their corresponding effects:
\begin{itemize}
\item Do not change the current solution.
\item Add some new vertices to the current solution.
\item Replace some vertices in the current solution with other vertices without increasing its size.
\end{itemize} 
The rules of the third type will maintain the size of optimal solution unchanged, so we know $x = |OPT(G)| - |OPT(H)|$ is equal to the total number of vertices added by the second type of rules.
Since the rules of the third type cannot increase the size of the solution during the lifting step, we know the size difference after applying a reverse rule of the third type is non-positive.
That is, when we apply any reverse rule of the third type to a solution $U_0$ and obtain a new solution $U_1$, we have $|U_1| - |U_0| \le 0$.
Let $y$ be the sum of size differences over all third type rules applied during the lifting step.
Then we know $y$ is also non-positive.
Note that the rules of the second type will only add vertices that is not in the kernel, so it contributes the same vertices to $OPT(G)$ as to any other solution $U_G$.  
Therefore, we know $|U_G| - |U_H| = x + y$, which implies the lemma.
\end{proof}

Now we are ready to prove Theorem~\ref{thm:ptas}.
\thmptas*
\begin{proof} 
We first give a bound for the size of the final solution.
For an integer $i > 0$, let $H_i$ be a resulting graph after the decomposition step. Note that these graphs are vertex-disjoint.
Let $OPT(H)$ and $OPT(G)$ be an optimal solution for $H$ and $G$ respectively.
We know $OPT(H) \cap H_i$ is a solution for FVS in $H_i$ since $H_i$ is a subgraph of $H$ and $OPT(H)$ is an optimal solution for $H$.
So we have $|OPT(H_i)| \le |OPT(H) \cap H_i|$ and then 
\begin{align}
\sum_{i>0} |OPT(H_i)| \le \sum_{i>0} |OPT(H) \cap H_i| \le |OPT(H)|.
\label{equ:a}
\end{align}
Let $S$ be the union of all separators found in the second step and $n$ be the size of $V(H)$.
Recall that parameter $r$ is the largest size of $H_i$.
Lipton and Tarjan~\cite{LT80} showed that the size of $S$ is at most $c_2 n/\sqrt{r}$ for some constant $c_2$.
If we choose $r = c_1^2 c_2^2 / \epsilon^2$, then we have 
\begin{align}
|S| \le c_2 n / \sqrt{r} = \epsilon n/c_1 \le \epsilon |OPT(H)|.
\label{equ:b}
\end{align}
Since $U_H$ is the union of $S$ and $OPT(H_i)$ for all $i>0$, by combining (\ref{equ:a}) and (\ref{equ:b}), we have 
\[|U_H| = |S| + \sum_{i>0} |OPT(H_i)| \le (1+\epsilon) |OPT(H)|.\]
Since the kernelization algorithm can only decrease the size of the optimal solution, we have $|OPT(H)| \le |OPT(G)|$.
By Lemma~\ref{lem:x}, we have $|U_G| - |U_H| \le |OPT(G)| - |OPT(H)|$.
Then we obtain the size bound for $U_G$: 
\[|U_G| \le |U_H| + |OPT(G)| - |OPT(H)| \le  (1+\epsilon) |OPT(G)|. \]

Bonamy and Kowalik~\cite{BK16} showed that a linear kernel for planar FVS can be constructed in $O(n \log n)$ deterministic time.
Each balanced separator can be computed in linear time by Lipton and Tarjan's algorithm~\cite{LT79}, so we can finish the second step in $O(n \log n)$ time as done in~\cite{LT80}.
The third step can be finished in $O(2^{c}n)$ time since each subgraph has size at most $c=r=O(1/\epsilon^2)$.
And the last step can be done in $O(n \log n)$ time as we described in Section~\ref{sec:kernel}.
So the total running time of this algorithm is $O(2^{c}n + n \log n)$ where $c = O(1/\epsilon^2)$.  
\end{proof}

\section{Detailed Experimental Results}

\begin{table}
\caption{Compare the 2-approximation algorithm solutions and the optimal solutions. The italic numbers are lower bounds from~\cite{Luccio98}.}
\begin{tabular}{p{2cm} p{1.2cm} p{1.2cm} p{1.2cm} p{1.2cm} p{2.2cm}}
\toprule 
graph & vertices & edges & 2approx & opt & approx ratio\\
\midrule
{\bf pace1} & 49 & 107 & 15 & 15 & 1.0 \\
{\bf pace2} & 118 & 179 & 18 & 18 & 1.0 \\
{\bf pace3} & 62 & 78 & 7 & 7 & 1.0 \\
{\bf pace4} & 118 & 179 & 18 & 18 & 1.0 \\
{\bf pace5} & 59 & 104 & 16 & 16 & 1.0 \\
{\bf pace6} & 70 & 85 & 8 & 8 & 1.0 \\
{\bf pace7} & 48 & 64 & 6 & 6 & 1.0 \\
{\bf pace8} & 74 & 92 & 8 & 8 & 1.0 \\
{\bf pace9} & 67 & 83 & 9 & 8 & 1.125 \\
{\bf pace10} & 90 & 103 & 8 & 7 & 1.143 \\
{\bf pace11} & 55 & 81 & 12 & 11 & 1.091 \\
{\bf pace12} & 110 & 147 & 16 & 15 & 1.067 \\
{\bf pace13} & 66 & 127 & 24 & 21 & 1.143 \\
{\bf pace14} & 153 & 177 & 12 & 12 & 1.0 \\
{\bf pace15} & 149 & 193 & 16 & 16 & 1.0 \\
{\bf pace16} & 73 & 95 & 10 & 10 & 1.0 \\
{\bf pace17} & 45 & 64 & 8 & 8 & 1.0 \\
{\bf pace18} & 145 & 186 & 17 & 16 & 1.063 \\
{\bf pace19} & 158 & 189 & 15 & 15 & 1.0 \\
{\bf pace20} & 61 & 78 & 8 & 7 & 1.143 \\
{\bf pace21} & 4960 & 9462 & 898 & 898 & 1.0 \\
{\bf grid1} & 450000 & 1796400 & 149527 & {\it 149401} & 1.001 \\ 
{\bf grid2} & 600000 & 2396800 & 199552 & {\it 199468} & 1.001 \\ 
{\bf grid3} & 1000000 & 3996000 & 332669 & {\it 332668} & 1.0 \\ 
{\bf grid4} & 1680000 & 6714800 & 559252 & {\it 559134} & 1.001 \\ 
{\bf grid5} & 2100000 & 8394200 & 699285 & {\it 699034} & 1.001 \\ 
{\bf random1} & 699970 & 2000000 & 193601 & 192902 & 1.004 \\
{\bf random2} & 1197582 & 3000000 & 285550 & 284195 & 1.005 \\
{\bf random3} & 1399947 & 4000000 & 387216 & 385813 & 1.004 \\
{\bf random4} & 1999760 & 5600000 & 540550 & 538524 & 1.004 \\
{\bf random6} & 873280 & 1200000 & 87300 & 86796 & 1.006 \\
{\bf random7} & 1061980 & 1500000 & 111973 & 111324 & 1.006 \\
{\bf random8} & 1227072 & 1700000 & 125374 & 124635 & 1.006 \\
{\bf random9} & 1520478 & 2200000 & 167709 & 166737 & 1.006 \\
{\bf random10} & 2050946 & 3300000 & 270981 & 269315 & 1.006 \\
\bottomrule
\end{tabular}
\label{tab:opt}
\end{table}

\begin{table}
\caption{Summary of the solution sizes of different PTAS variants and the 2-approximation algorithm. The results for the three variants (``vanilla'', ``minimal'' and ``optimized'') of the PTAS are computed for $r=60$. The results for ``best'' are computed for different values of $r$ by different variants of PTAS depending on the graphs.  For {\bf grid} graphs, ``minimal'' variant is applied for ``best'' results, and for other graphs ``optimized'' variant is applied. }
\begin{tabular}{p{1.5cm} p{1.1cm} p{1.1cm} p{1.1cm} p{1.1cm} p{1.15cm} p{1.1cm} p{1.1cm} }
\toprule 
graph          & vertices & edges & vanilla & minimal & optimized & best & 2approx \\
\midrule
{\bf NY}       & 264953 & 366250 & 52487 & 44178 & 43159 & 42790 & {\bf 41709} \\
{\bf BAY}      & 322694 & 400233 & 41264 & 35518 & 34877 & 34612 & {\bf 34211} \\
{\bf COL}      & 437294 & 524437 & 45831 & 40142 & 39445 & {\bf 39201} & 39205 \\
{\bf NW}       & 1214463 & 1423402 & 111149 & 97627 & 95951 & {\bf 95362} & 95735 \\
{\bf FLA}      & 1074167 & 1351411 & 147632 & 128941 & 126869 & 126361 & {\bf 125841} \\
{\bf CAL}      & 1898842 & 2331204 & 230712 & 198924 & 195303 & 194379 & {\bf 192465} \\
{\bf LKS}      & 2763392 & 3407840 & 342592 & 290065 & 284034 & 282972 & {\bf 277196} \\
{\bf NE}       & 1528387 & 1941840 & 222569 & 188936 & 185115 & 183916 & {\bf 181336} \\
{\bf E}        & 3608115 & 4372928 & 414132 & 354444 & 347418 & 345891 & {\bf 342371} \\
{\bf W}        & 6286759 & 7608797 & 705851 & 611351 & 600898 & 598667 & {\bf 594943} \\
{\bf triangu1} & 600000 & 1799963 & 260475 & 238502 & 235796 & {\bf 232695} & 233958 \\
{\bf triangu2} & 800000 & 2399961 & 348305 & 318058 & 314257 & {\bf 310034} & 311349 \\
{\bf triangu3} & 1000000 & 2999955 & 434793 & 396644 & 392045 & {\bf 387690} & 387908 \\
{\bf triangu4} & 1200000 & 3599966 & 521141 & 474683 & 468072 & {\bf 462865} & 463157 \\
{\bf triangu5} & 1400000 & 4199957 & 606425 & 551194 & 544006 & 538496 & {\bf 537270} \\
{\bf random1} & 699970 & 2000000 & 192970 & 192925 & 192925 & {\bf 192907} & 193601 \\
{\bf random2} & 1197582 & 3000000 & {\bf 284195} & {\bf 284195} & {\bf 284195} & {\bf 284195} & 285550 \\
{\bf random3} & 1399947 & 4000000 & 385928 & 385837 & 385866 & {\bf 385818} & 387216 \\
{\bf random4} & 1999760 & 5600000 & 538620 & 538556 & 538573 & {\bf 538529} & 540550 \\
{\bf random5} & 2199977 & 6400000 & 617940 & 617761 & 617768 & {\bf 617681} & 619710 \\
{\bf random6} & 873280 & 1200000 & {\bf 86796} & {\bf 86796} & {\bf 86796} & {\bf 86796} & 87300 \\
{\bf random7} & 1061980 & 1500000 & {\bf 111324} & {\bf 111324} & {\bf 111324} & {\bf 111324} & 111973 \\
{\bf random8} & 1227072 & 1700000 & {\bf 124635} & {\bf 124635} & {\bf 124635} & {\bf 124635} & 125374 \\
{\bf random9} & 1520478 & 2200000 & {\bf 166737} & {\bf 166737} & {\bf 166737} & {\bf 166737} & 167709 \\
{\bf random10} & 2050946 & 3300000 & {\bf 269315} & {\bf 269315} & {\bf 269315} & {\bf 269315} & 270981 \\
{\bf grid1} & 450000 & 1796400 & 165650 & 156071 & 161462 & 152515 & {\bf 149527} \\
{\bf grid2} & 600000 & 2396800 & 219589 & 206563 & 215920 & 205189 & {\bf 199552} \\
{\bf grid3} & 1000000 & 3996000 & 363242 & 344549 & 360783 & 340078 & {\bf 332669} \\
{\bf grid4} & 1680000 & 6714800 & 621736 & 585211 & 604861 & 572608 & {\bf 559252} \\
{\bf grid5} & 2100000 & 8394200 & 760797 & 723507 & 759568 & 719410 & {\bf 699285} \\
\bottomrule
\end{tabular}
\label{tab:ptas-size}
\end{table}

\begin{table}
\caption{Summary of the running time of different variants of our PTAS. The parameter $r$ is set as 60 for all PTAS variants.}
\begin{tabular}{p{2.0cm} p{1.1cm} p{1.1cm} p{1.7cm} p{1.7cm} p{1.7cm} p{1.1cm} }
\toprule 
graph          & vertices & edges & vanilla & minimal & optimized & 2approx \\
\midrule
{\bf NY}       & 264953 & 366250 & 5m 45s & 5m 39s & 13m 46s & 9s \\
{\bf BAY}      & 322694 & 400233 & 4m 25s & 4m 17s & 10m & 9s  \\
{\bf COL}      & 437294 & 524437 & 5m 8s & 4m 56s & 10m 46s & 13s  \\
{\bf NW}       & 1214463 & 1423402 & 13m 21s & 12m 49s & 26m 2s & 37s \\
{\bf FLA}      & 1074167 & 1351411 & 16m 38s & 16m 37s & 34m 18s & 33s \\
{\bf CAL}      & 1898842 & 2331204 & 35m 36s & 33m 2s & 1h 12m 20s & 58s \\
{\bf LKS}      & 2763392 & 3407840 & 1h 26s & 1h 1m 46s & 2h 48m 44s & 1m 31s \\
{\bf NE}       & 1528387 & 1941840 & 41m 13s & 43m 32s & 1h 53m 1s & 48s \\
{\bf E}        & 3608115 & 4372928 & 1h 39m 17s & 1h 40m 7s & 3h 34m 25s & 1m 52s \\
{\bf W}        & 6286759 & 7608797 & 3h 44m 38s & 3h 43m 26s & 7h 39m 21s & 3m 15s \\
{\bf triangu1} & 600000 & 1799963 & 1h 6m 23s & 55m 39s & 1h 56m 32s & 55s \\
{\bf triangu2} & 800000 & 2399961 & 2h 28m 7s & 1h 57m 15s & 4h 15m 24s & 1m 7s \\
{\bf triangu3} & 1000000 & 2999955 & 4h 17m 57s & 3h 41m 50s & 6h 47m 30s & 1m 22s \\
{\bf triangu4} & 1200000 & 3599966 & 6h 42m 30s & 5h 48m 12s & 9h 59m 24s & 1m 39s \\
{\bf triangu5} & 1400000 & 4199957 & 9h 20m 14s & 8h 14m 59s & 14h 16m 45s & 1m 58s \\
{\bf random1} & 699970 & 2000000 & 50m 40s & 53m 18s & 1h 22m 10s & 1m 20s \\
{\bf random2} & 1197582 & 3000000 & 1h 19m 18s & 1h 22m 27s & 1h 37m 57s & 1m 58s \\
{\bf random3} & 1399947 & 4000000 & 2h 42m 55s & 2h 25m 9s & 2h 5m 3s & 2m 34s \\
{\bf random4} & 1999760 & 5600000 & 4h 1m 48s & 4h 30m 10s & 3h 32m 19s & 3m 53s \\
{\bf random5} & 2199977 & 6400000 & 4h 57m 53s & 5h 38m 16s & 4h 39m 10s & 4m 24s \\
{\bf random6} & 873280 & 1200000 & 14m 9s & 16m 8s & 13m 8s & 58s \\
{\bf random7} & 1061980 & 1500000 & 19m 54s & 22m 33s & 18m 10s & 1m 5s \\
{\bf random8} & 1227072 & 1700000 & 21m 44s & 24m 29s & 19m 41s & 1m 14s \\
{\bf random9} & 1520478 & 2200000 & 30m 52s & 35m 13s & 28m 9s & 1m 33s \\
{\bf random10} & 2050946 & 3300000 & 1h 1m & 1h 10m 29s & 55m 41s & 2m 20s \\
{\bf grid1} & 450000 & 1796400 & 25m 41s & 24m 28s & 1h 13m 18s & 24s \\
{\bf grid2} & 600000 & 2396800 & 35m 24s & 34m 48s & 1h 58m 24s & 29s \\
{\bf grid3} & 1000000 & 3996000 & 1h 11m 9s & 1h 6m 49s & 4h 56m 29s & 48s \\
{\bf grid4} & 1680000 & 6714800 & 2h 56m 47s & 2h 18m 7s & 10h 7m 46s & 1m 25s \\
{\bf grid5} & 2100000 & 8394200 & 3h 48m 16s & 3h 0m 23s & 15h 1m 48s & 1m 44s \\
\bottomrule
\end{tabular}
\label{tab:ptas-time}
\end{table}

\begin{table}
\caption{Summary of the solution sizes of different heuristic algorithms. HAS (avg) is the solution size averaged over five runs of HAS with different random seeds. HAS (min) is the minimum size among the five values.  The improv value is computed as  1-HAS(avg)/2approx.}
\begin{tabular}{p{1.5cm} p{1.1cm} p{1.1cm} p{1.1cm} p{1.1cm} p{1.1cm} p{1.3cm} p{1.1cm} }
\toprule 
graph          & vertices & edges & 2approx & hybrid & HAS (avg) & HAS (min) & improv \\
\midrule
{\bf NY}       & 264953 & 366250 & 41709 & 40868 & 39582 & {\bf 39537} & 0.051 \\
{\bf BAY}      & 322694 & 400233 & 34211 & 33372 & 32492 & {\bf 32446} & 0.050 \\
{\bf COL}      & 437294 & 524437 & 39205 & 38141 & 37167 & {\bf 37141} & 0.052 \\ 
{\bf NW}       & 1214463 & 1423402 & 95735 & 93272 & 90844 & {\bf 90816} & 0.051 \\
{\bf FLA}      & 1074167 & 1351411 & 125841 & 122521 & 119211 & {\bf 119163} & 0.053 \\
{\bf CAL}      & 1898842 & 2331204 & 192465 & 187610 & 182282 & {\bf 182191} & 0.053 \\
{\bf LKS}      & 2763392 & 3407840 & 277196 & 270651 & 262041 & {\bf 261917} & 0.055 \\ 
{\bf NE}       & 1528387 & 1941840 & 181336 & 177101 & 171742 & {\bf 171619} & 0.053 \\ 
{\bf E}        & 3608115 & 4372928 & 342371 & 333997 & 324286 & {\bf 324056} & 0.053 \\ 
{\bf W}        & 6286759 & 7608797 & 594943 & 579822 & 562687 & {\bf 562606} & 0.054 \\ 
{\bf triangu1} & 600000 & 1799963 & 233958 & 228711 & 225802 & {\bf 225791} & 0.035 \\ 
{\bf triangu2} & 800000 & 2399961 & 311349 & 304288 & 300443 & {\bf 300430} & 0.035 \\ 
{\bf triangu3} & 1000000 & 2999955 & 387908 & 379251 & 374485 & {\bf 374468} & 0.035 \\ 
{\bf triangu4} & 1200000 & 3599966 & 463157 & 452692 & 447236 & {\bf 447206} & 0.034 \\ 
{\bf triangu5} & 1400000 & 4199957 & 537270 & 525416 & 519361 & {\bf 519229} & 0.033 \\ 
{\bf random1} & 699970 & 2000000 & 193601 & 193071 & 192908 & {\bf 192904} & 0.004 \\ 
{\bf random2} & 1197582 & 3000000 & 285550 & 284340 & 284198 & {\bf 284195} & 0.005 \\ 
{\bf random3} & 1399947 & 4000000 & 387216 & 386095 & 385818 & {\bf 385815} & 0.004 \\ 
{\bf random4} & 1999760 & 5600000 & 540550 & 538958 & 538533 & {\bf 538526} & 0.004 \\ 
{\bf random5} & 2199977 & 6400000 & 619710 & 618238 & 617674 & {\bf 617671} & 0.003 \\ 
{\bf random6} & 873280 & 1200000 & 87300 & 86809 & 86802 & {\bf 86796} & 0.006 \\ 
{\bf random7} & 1061980 & 1500000 & 111973 & 111343 & 111326 & {\bf 111324} & 0.006 \\ 
{\bf random8} & 1227072 & 1700000 & 125374 & 124655 & 124641 & {\bf 124636} & 0.006 \\ 
{\bf random9} & 1520478 & 2200000 & 167709 & 166762 & 166739 & {\bf 166739} & 0.006 \\ 
{\bf random10} & 2050946 & 3300000 & 270981 & 269378 & 269319 & {\bf 269318} & 0.006 \\ 
{\bf grid1} & 450000 & 1796400 & 149527 & {\bf 149511} & {\bf 149511} & {\bf 149511} & 0.001 \\ 
{\bf grid2} & 600000 & 2396800 & 199552 & {\bf 199506} & {\bf 199506} & {\bf 199506} & 0.001 \\ 
{\bf grid3} & 1000000 & 3996000 & {\bf 332669} & 332847 & 332847 & 332847 & -0.001 \\ 
{\bf grid4} & 1680000 & 6714800 & {\bf 559252} & 559298 & 559298 & 559298 & -0.001 \\ 
{\bf grid5} & 2100000 & 8394200 & 699285 & {\bf 699195} & {\bf 699195} & {\bf 699195} & 0.001 \\ 
\bottomrule
\end{tabular}
\label{tab:others-size}
\end{table}

\begin{table}
\caption{Summary of the running time of different heuristic algorithms. HAS (avg) is the running time averaged over five runs of HAS with different random seeds. HAS (min) is the running time for the HAS run with minimum solution size.}
\begin{tabular}{p{2.0cm} p{1.1cm} p{1.1cm} p{1.1cm} p{1.2cm} p{1.7cm} p{1.7cm} }
\toprule 
graph          & vertices & edges & 2approx & hybrid & HAS (avg) & HAS (min)\\
\midrule
{\bf NY}       & 264953 & 366250 & 9s & 1m 28s & 1h 6m 30s & 1h 12m 33s \\
{\bf BAY}      & 322694 & 400233 & 9s  & 1m 9s & 54m 53s & 1h 8m 2s \\
{\bf COL}      & 437294 & 524437 & 13s  & 1m 19s & 51m 22s & 1h 10m 49s \\
{\bf NW}       & 1214463 & 1423402 & 37s & 3m 28s & 2h 18m 40s & 2h 24m 32s \\
{\bf FLA}      & 1074167 & 1351411 & 33s & 4m 12s & 3h 57m 25s & 4h 7m 32s \\
{\bf CAL}      & 1898842 & 2331204 & 58s & 7m 32s & 6h 10m 28s & 6h 12m 30s \\
{\bf LKS}      & 2763392 & 3407840 & 1m 31s & 11m 38s & 13h 21m 22s & 14h 35m 33s \\
{\bf NE}       & 1528387 & 1941840 & 48s & 7m 10s & 5h 52m 16s & 6h 49m 51s \\
{\bf E}        & 3608115 & 4372928 & 1m 52s & 14m 23s & 16h 49m 22s & 16h 27m 17s \\
{\bf W}        & 6286759 & 7608797 & 3m 15s & 23m 46s & 35h 41m 39s & 38h 10m 5s \\
{\bf triangu1} & 600000 & 1799963 & 55s & 1m 46s & 1h 32m 37s & 1h 39m 7s \\
{\bf triangu2} & 800000 & 2399961 & 1m 7s & 1m 58s & 2h 11m 39s & 2h 35m \\
{\bf triangu3} & 1000000 & 2999955 & 1m 22s & 2m 20s & 2h 33m 5s & 2h 42m 50s \\
{\bf triangu4} & 1200000 & 3599966 & 1m 39s & 2m 55s & 3h 15m 9s & 3h 54m 12s \\
{\bf triangu5} & 1400000 & 4199957 & 1m 58s & 3m 30s & 4h 38m & 3h 23m 16s \\
{\bf random1} & 699970 & 2000000 & 1m 20s & 1m 54s & 6m 54s & 7m 31s \\
{\bf random2} & 1197582 & 3000000 & 1m 58s & 2m 26s & 7m 47s & 8m 57s \\
{\bf random3} & 1399947 & 4000000 & 2m 34s & 3m 31s & 15m 5s & 15m 24s \\
{\bf random4} & 1999760 & 5600000 & 3m 53s & 5m 12s & 21m 2s & 21m 53s \\
{\bf random5} & 2199977 & 6400000 & 4m 24s & 5m 58s & 30m 15s & 32m 43s \\
{\bf random6} & 873280 & 1200000 & 58s & 1m 17s & 1m 45s & 2m 12s \\
{\bf random7} & 1061980 & 1500000 & 1m 5s & 1m 31s & 2m 4s & 1m 55s \\
{\bf random8} & 1227072 & 1700000 & 1m 14s & 1m 42s & 2m 25s & 2m 13s \\
{\bf random9} & 1520478 & 2200000 & 1m 33s & 2m 17s & 3m 17s & 3m 18s \\
{\bf random10} & 2050946 & 3300000 & 2m 20s & 3m 29s & 5m 10s & 5m 56s \\
{\bf grid1} & 450000 & 1796400 & 24s & 3m 49s & 1h 1m 16s & 1h 2m 26s \\
{\bf grid2} & 600000 & 2396800 & 29s & 5m 19s & 1h 18m 18s & 1h 18m 54s \\
{\bf grid3} & 1000000 & 3996000 & 48s & 8m 48s & 1h 55m 10s & 1h 56m 43s \\
{\bf grid4} & 1680000 & 6714800 & 1m 25s & 14m 47s & 3h 1m 44s & 3h 4m 5s \\
{\bf grid5} & 2100000 & 8394200 & 1m 44s & 17m 44s & 3h 52m 47s & 3h 42m 23s \\
\bottomrule
\end{tabular}
\label{tab:others-time}
\end{table}

\end{document}